\newcommand{\metric}[2]{\inf\limits_{#1'\in\partial_{#1},#2'\in\partial_{#2}}\|#1'-#2'\|}
\definecolor{redlight}{RGB}{235, 106, 132}
\definecolor{greenlight}{RGB}{195, 252, 186}
\definecolor{bluelight}{RGB}{178, 180, 249}
\newcounter{todocounterak}
\newcounter{todocounteras}
\newcommand{\asline}[1]
{\stepcounter{todocounteras}
 \todo[color=blue!30,inline]{AS-\thetodocounteras: #1}
}
\newcounter{todocountermk}
\newcounter{todocountermw}
\definecolor{goodbluebar}{RGB}{114, 147, 203}
\definecolor{goodorangebar}{RGB}{225, 151, 76}
\definecolor{goodgreenbar}{RGB}{132, 186, 91}
\definecolor{goodredbar}{RGB}{211, 94, 96}
\definecolor{goodblackbar}{RGB}{128, 133, 133}
\definecolor{goodpurplebar}{RGB}{144, 103, 167}
\definecolor{goodwinebar}{RGB}{171, 104, 87}
\definecolor{goodgoldbar}{RGB}{204, 194, 16}
\definecolor{goodblue}{RGB}{57, 106, 177}
\definecolor{goodorange}{RGB}{218, 124, 48}
\definecolor{goodgreen}{RGB}{62, 150, 81}
\definecolor{goodred}{RGB}{204, 37, 41}
\definecolor{goodblack}{RGB}{83, 81, 84}
\definecolor{goodpurple}{RGB}{107, 76, 154}
\definecolor{goodwine}{RGB}{146, 36, 40}
\definecolor{goodgold}{RGB}{148, 139, 61}
\begin{document}

\title{Pairwise Learning to Rank by Neural Networks Revisited:
Reconstruction, Theoretical Analysis and Practical Performance}

\author{
Marius Köppel \inst{1}\thanks{These authors contributed equally.} 
\and
Alexander Segner\inst{1}\inst{\star} 
\and
Martin Wagener\inst{1}\inst{\star} 
\and
Lukas Pensel\inst{1} 
\and
Andreas Karwath\inst{2} 
\and
Stefan Kramer\inst{1} 
}
\authorrunning{
M. Köppel et al.
}
\titlerunning{Pairwise Learning to Rank by Neural Networks Revisited}
\toctitle{Pairwise Learning to Rank by Neural Networks Revisited:
Reconstruction, Theoretical Analysis and Practical Performance}
\tocauthor{Marius~Köppel, Alexander~Segner, Martin~Wagener, Lukas~Pensel, Andreas~Karwath, Stefan~Kramer}
%
\institute{Johannes Gutenberg-Universität Mainz \\ Saarstraße 21, 55122 Mainz, Germany \\ \email{makoeppe@students.uni-mainz.de}
\and
University of Birmingham,  \\ Haworth Building (Y2), 
B15 2TT, United Kingdom \\ \email{a.karwath@bham.ac.uk}
}
\maketitle              

\begin{abstract}
We present a pairwise learning to rank approach based on a neural net, called DirectRanker, that generalizes the RankNet architecture. We show mathematically that our model is reflexive, antisymmetric, and transitive allowing for simplified training and improved performance. Experimental results on the LETOR MSLR-WEB10K, MQ2007 and MQ2008 datasets show that our model outperforms numerous state-of-the-art methods, while being inherently simpler in structure and using a pairwise approach only.
\keywords{Information Retrieval \and Machine learning \and Learning to rank.}
\end{abstract}

\section{Introduction}
Information retrieval has become one of the most important applications of machine learning techniques in the last years. The vast amount of data in every day life, research and economy makes it necessary to retrieve only relevant data. One of the main problems in information retrieval is the \textit{learning to rank} problem \cite{cooper1992probabilistic,Liu:2009:LRI:1618303.1618304}. Given a query $q$ and a set of documents $d_1,\dots,d_n$ one wants to find a \textit{ranking} that gives a (partial) order of the documents according to their relevance relative to $q$. Documents can in fact be instances from arbitrary sets and do not necessarily need to correspond to queries. 
\\ \\
Web search is one of the most obvious applications, however, product recommendation or question answering can be dealt with in a similar fashion. Most common machine learning methods have been used in the past to tackle the learning to rank problem \cite{Burges:2005:LRU:1102351.1102363,freund2003efficient,herbrich2000large,jiang2009learning}. In this paper we use an artificial neural net which, in a pair of documents, finds the more relevant one. This is known as the pairwise ranking approach, which can then be used to sort lists of documents. The chosen architecture of the neural net gives rise to certain properties which significantly enhance the performance compared to other approaches. We note that the structure of our neural net is essentially the same as the one of RankNet \cite{Burges:2005:LRU:1102351.1102363}. However, we relax some constraints which are used there and use a more modern optimization algorithm. This leads to a significantly enhanced performance and puts our approach head-to-head with state-of-the-art methods. This is especially remarkable given the relatively simple structure of the model and the consequently small training and testing times. Furthermore, we use a different formulation to describe the properties of our model and find that it is inherently reflexive, antisymmetric and transitive.
In summary, the contributions of this paper are:

\begin{enumerate}[itemsep=2pt]
\item	We propose a simple and effective scheme for neural network structures for pairwise ranking, called Direct\-Ranker, which is a generalization of RankNet.
\item Theoretical analysis shows which of the components of such network structures give rise to their properties and what the requirements on the training data are to make them work.
\item Keeping the essence of RankNet and optimizing it with modern methods, experiments show that, contrary to general belief, pairwise methods can still be competitive with the more recent and much more complex listwise methods.
\item Finally, we show that the method can be used in principle for classification as well. In particular, for the discovery of substructures within a single class, if they exist.
\end{enumerate}
The paper is organized as follows: We discuss different models related to our approach in \Cref{sec:related-work}. The model itself and certain theoretical properties are discussed in \Cref{sec:theory} before describing the setup for experimental tests in \Cref{sec:experi-setup} and their results in Section~\Cref{sec:experi-results}. Finally, we conclude our findings in \Cref{sec:conclusion}.

\section{Related Work}\label{sec:related-work}
There are a few fundamentally different approaches to the learning to rank problem that have been applied in the past. They mainly differ in the underlying machine learning model and in the number of documents that are combined in the cost during training. Common examples for machine learning models used in ranking are: decision trees \cite{Friedman00greedyfunction}, support vector machines \cite{cao2006adapting}, artificial neural nets \cite{learning-to-rank-from-pairwise-approach-to-listwise-approach}, boosting \cite{adapting-boosting-for-information-retrieval-measures}, and evolutionary methods \cite{Ibrahim:2017:EES:3019612.3019696}. During training, a model must rank a list of $n$ documents which can then be used to compute a suitable cost function by comparing it to the ideally sorted list. If $n=1$ the approach is called \textit{pointwise}. A machine learning model assigns a numerical value to a single document and compares it to the desired relevance label to compute a cost function. This is analogous to a classification of each document. If $n=2$ the approach is called \textit{pairwise}. A model takes two documents and determines the more relevant one. We implement this concept in our model, the DirectRanker. If $n>2$ the approach is called \textit{listwise} and the cost is computed on a whole list of sorted documents. Examples for these different approaches are \cite{cooper1992probabilistic,fuhr1989optimum,li2008mcrank} for pointwise, \cite{Burges:2005:LRU:1102351.1102363,cao2006adapting,Friedman00greedyfunction} for pairwise and \cite{learning-to-rank-from-pairwise-approach-to-listwise-approach,Ibrahim:2017:EES:3019612.3019696,Xu:2007:ABA:1277741.1277809} for listwise models. 
\\ \\
Beside our own model the focus of this paper lies mainly on the pairwise approach RankNet \cite{Burges:2005:LRU:1102351.1102363} and the listwise approach LambdaMART \cite{adapting-boosting-for-information-retrieval-measures}. RankNet is a neural net defining a single output for a pair of documents. For training purposes, a cross entropy cost function is defined on this output. LambdaMART on the other hand is a boosted tree version of LambdaRank \cite{lambdarank} which itself is based on RankNet. Here, listwise evaluation metrics $M$ are optimized by avoiding cost functions and directly defining $\lambda$-gradients
\[
 \lambda_i=\sum_jS_{ij}\left|\Delta M\frac{\partial C_{ij}}{\partial o_{ij}}\right|
\]
where $\Delta M$ is the difference in the listwise metric when exchanging documents $i$ and $j$ in a query, $C$ is a pairwise cost function, and $o_{ij}$ is a pairwise output of the ranking model. $S_{ij}=\pm1$ depending on whether document $i$ or $j$ is more relevant.
\\ \\
The main advantages of RankNet and LambdaMART are training time and performance: While RankNet performs well on learning to rank tasks it is usually outperformed by LambdaMART considering listwise metrics which is usually the goal of learning to rank. On the other hand, since for the training of LambdaMART it is necessary to compute a contribution to $\lambda_i$ for every combination of two documents of a query for all queries of a training set, it is computationally more demanding to train this model compared to the pairwise optimization of RankNet (cf \Cref{tab:runtime}). In general, multiple publications (most prominently \cite{learning-to-rank-from-pairwise-approach-to-listwise-approach}) suggest that listwise approaches are fundamentally superior to pairwise ones. As the results of the experiments discussed in \Cref{sec:compare-ranker} show, this is not necessarily the case.
\\ \\
Important properties of rankers are their reflexivity, antisymmetry and transitivity. To implement a reasonable order on the documents these properties must be fulfilled. In \cite{rigutini2008neural} the need for antisymmetry and a simple method to achieve it in neural nets are discussed. Also \cite{Burges:2005:LRU:1102351.1102363} touches on the aspect of transitivity. However, to the best of our knowledge, a rigorous proof of these characteristics for a ranking model has not been presented so far. A theoretical analysis along those lines is presented in \Cref{sec:theory} of the paper. 

\section{DirectRanker Approach}\label{sec:theory}
\begin{figure*}[ht]
	\centering

      
  \begin{tikzpicture}[scale=2, every node/.style={scale=1.0}]
    \begin{scope}[every node/.style={fill=goodredbar,circle,thin,draw,minimum width=0.75cm}]
      \foreach \xx/\yy/\lx/\ly/\nn/\nnlabel in {
        0/0/-0.0/-0.2/i1-1/i_{1_1},
        0/-0.5/-0.2/0.2/i1-2/i_{1_2},
        0/-1.25/-0.2/-0.2/i1-3/i_{1_n},
        1.0/-0.25/-0.2/-0.2/o1-1/o_{1_1}, 
        1.0/-0.75/-0.2/-0.2/o1-2/o_{1_2},
        0/-2.0/-0.2/-0.2/i2-1/i_{2_1},
        0/-2.5/-0.2/0.2/i2-2/i_{2_2},
        0/-3.25/-0.2/-0.2/i2-3/i_{2_n},
        1.0/-2.25/-0.2/-0.2/o2-1/o_{2_1}, 
        1.0/-2.75/-0.2/-0.2/o2-2/o_{2_2},
        2.5/-1.125/-0.2/-0.2/i3-1/\ominus,  
        2.5/-1.875/-0.2/-0.2/i3-2/\ominus,  
        3.5/-1.50/-0.2/-0.2/o4-1/o_{1} }{ 
        \node [label={[shift={(0.0,-0.75)}] \footnotesize\texttt{$\nnlabel$}}] (\nn) at (\xx,\yy) {} ;
      }
    \end{scope}
    \begin{scope}[>=, every edge/.style={draw=black,thin}]
      \foreach \ee/\sx/\sy\no/\nt/\bend/\bdir in {
        $w_{i_{11}-o_{11}}$/0.15/0.1/i1-1/o1-1/0/right,
        $w_{i_{12}-o_{11}}$ /0.15/-0.2/i1-2/o1-1/0/left,
        $w_{i_{13}-o_{11}}$ /0.15/-0.2/i1-3/o1-1/0/left,
        $w_{i_{11}-o_{12}}$/0.15/0.1/i1-1/o1-2/0/right,
        $w_{i_{12}-o_{12}}$ /0.15/-0.2/i1-2/o1-2/0/left,
        $w_{i_{13}-o_{12}}$ /0.15/-0.2/i1-3/o1-2/0/left,
        $w_{i_{21}-o_{21}}$/0.15/0.1/i2-1/o2-1/0/right,
        $w_{i_{22}-o_{21}}$ /0.15/-0.2/i2-2/o2-1/0/left,
        $w_{i_{23}-o_{21}}$ /0.15/-0.2/i2-3/o2-1/0/left,
        $w_{i_{21}-o_{22}}$/0.15/0.1/i2-1/o2-2/0/right,
        $w_{i_{22}-o_{22}}$ /0.15/-0.2/i2-2/o2-2/0/left,
        $w_{i_{23}-o_{22}}$ /0.15/-0.2/i2-3/o2-2/0/left,
        $w_{o_{11}-i_{21}}$/0.15/0.1/o1-1/i3-1/0/right,
        $w_{o_{12}-o_{22}}$ /0.15/-0.2/o1-2/i3-2/0/left,
        $w_{o_{21}-i_{21}}$/0.15/0.1/o2-1/i3-1/0/right,
        $w_{o_{22}-o_{22}}$ /0.15/-0.2/o2-2/i3-2/0/left,
        $w_{o_{31}-o_{41}}$ /0.15/-0.2/i3-1/o4-1/0/left,
        $w_{o_{31}-o_{41}}$ /0.15/-0.2/i3-2/o4-1/0/left}
      {
        \path [-] (\no) edge [bend \bdir=\bend] node [xshift=\sx cm,yshift=\sy cm] {} (\nt); 
      }

    \node (nn1) at (1.0,-0.5) [draw,minimum width=0.8cm,minimum height=2.0cm, fill=goodgreenbar,rounded corners=2.5mm] {\footnotesize {$nn_1$}};
    \node (nn2) at (1.0,-2.5) [draw,minimum width=0.8cm,minimum height=2.0cm, fill=goodgreenbar,rounded corners=2.5mm] {\footnotesize {$nn_2$}};

    \path [Latex-Latex,dashed] (nn1) edge [bend right=30] node  [xshift=1.1cm, text width=2cm,align=left, font=\scriptsize] {shared \\ parameters} (nn2); 

    \node (diffnode) at (4.0,-2.5) [text width=5cm,align=left,font=\scriptsize]  {Differences of corresponding output neurons from $nn_1$ and $nn_2$};
    \path [Latex-,dashed] (i3-2) edge [bend right=30] node {} (diffnode.west);  
    
    \node (outnode) at (4.0,-0.5) [text width=3.5cm,align=left,font=\scriptsize]  {Antisymmetric and sign conserving activation, zero bias in the single output neuron};
    \path [Latex-,dashed] (o4-1) edge [bend right=30] node {} (outnode.south);

    \path (i1-2) -- node[auto=false,rotate=-90]{\ldots} (i1-3);
    \path (i2-2) -- node[auto=false,rotate=-90]{\ldots} (i2-3);
    \path (i3-1) -- node[auto=false,rotate=-90]{\ldots} (i3-2);

    \draw [thick,dashed] (1.75,0) -- (1.75,-3.5);

    \node (feature) at (1.25,-3.5) [] {Feature Part};
    \node (feature) at (2.3,-3.515) [] {Ranking Part};

    \node (doc1) [left=1cm of i1-1, rotate=90] {Document 1};
    \node (doc1) [left=1cm of i2-1, rotate=90] {Document 2};
    
    \end{scope}
  \end{tikzpicture}
	\caption{Schema of the DirectRanker. $nn_1$ and $nn_2$ can be arbitrary networks (or other function approximators) as long as they give the same output for the same inputs $i_j$. The bias of the output neuron $o_{1}$ has to be zero and the activation antisymmetric and sign conserving.
}
	\label{fig:direct_ranker}
\end{figure*}
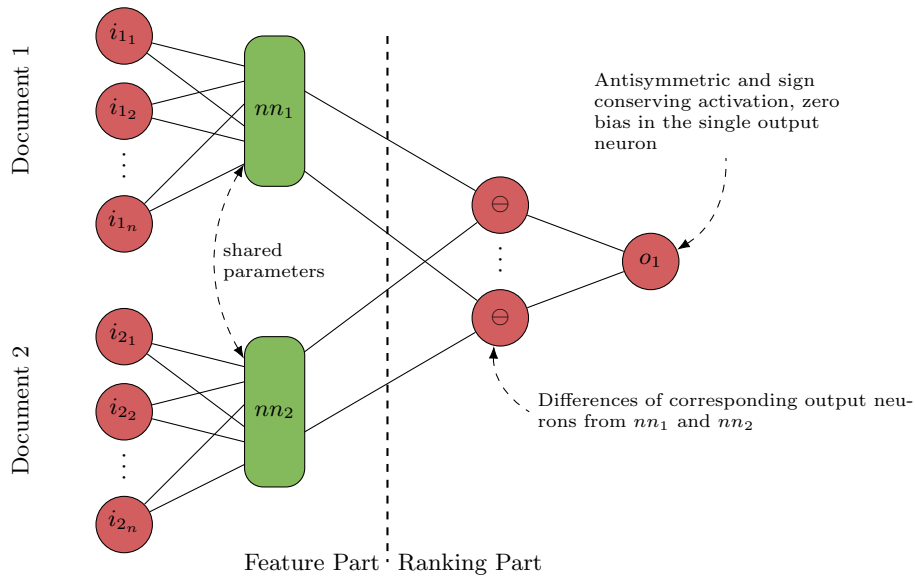
Our approach to ranking is of the pairwise kind, i.e. it takes two documents and decides which one is more relevant than the other. This approach comes with some difficulties, as, to achieve a consistent and unique ranking, the model has to define an order.
To achieve this, we implement a quasiorder $\succeq$ on the feature space $\mathcal F$ that satisfies the following conditions for all $x,y,z\in\mathcal F$:

\begin{enumerate}[label=(\Alph*),start=1,itemsep=1pt,,leftmargin=\widthof{(A)a}]
 \item Reflexivity: $x\succeq x$\label{ax:succ_antiref}
 \item Antisymmetry: $x\nsucceq y\Rightarrow y\succeq x$\label{ax:succ_antisym}
 \item Transitivity: $(x\succeq y\wedge y\succeq z)\Rightarrow x\succeq z$\label{ax:succ_trans}
\end{enumerate}
We implement such an order using a ranking function $r:\mathcal F\times\mathcal F\to\mathbb R$ by defining
\begin{align}
 x\succeq y:\Leftrightarrow r(x,y)\ge0\,.
\end{align}
The conditions (A)-(C) for $\succeq$ can be imposed in form of the following conditions for $r$:

\begin{enumerate}[label=(\Roman*),leftmargin=\widthof{(III)n}]
 \item Reflexivity: $r(x,x)=0$ \label{ax:dr_antiref} 
 \item Antisymmetry: $r(x,y)=-r(y,x)$ \label{ax:dr_antisym}
 \item Transitivity: $(r(x,y)\ge0\wedge r(y,z)\ge0)\Rightarrow r(x,z)\ge0$ \label{ax:dr_trans}
\end{enumerate}
In our case, $r$ is the output of a neural network with specific structure to fulfill the above requirements. As shown by \cite{rigutini2008neural}, the antisymmetry can easily be guaranteed in neural network approaches by removing the biases of the neurons and choosing antisymmetric activation functions. Of course, the result will only be antisymmetric, if the features fed into the network are antisymmetric functions of the documents themselves, i.e., if two documents $A$ and $B$ are to be compared by the network, the extracted features of the document pair have to be antisymmetric under exchange of $A$ and $B$. 
\\ \\
This leads to the first difficulty since it is not at all trivial to extract such features containing enough information about the documents. Our model avoids this issue by taking features extracted from each of the documents and optimizing suitable antisymmetric features as a part of the net itself during the training process. This is done by using the structure depicted in \Cref{fig:direct_ranker}.
\\ \\
The corresponding features of two documents are fed into the two subnets $nn_1$ and $nn_2$, respectively. These networks can be of arbitrary structure, yet they have to be identical, i.e. share the same structure and parameters like weights, biases, activation, etc. The difference of the subnets' outputs is fed into a third subnet, which further consists only of one output neuron with antisymmetric activation and without a bias, representing the above defined function $r$. With the following theorem we show that this network satisfies conditions \ref{ax:dr_antiref} through \ref{ax:dr_trans}:
\begin{theorem}\label{thm:transitivity}
 Let $f$ be the output of an arbitrary neural network taking as input feature vectors $x\in\mathcal F$ and returning values $f(x)\in\mathbb R^n$. Let $o_1$ be a single neuron with antisymmetric and sign conserving activation function and without bias taking $\mathbb R^n$-valued inputs. The network returning $o_1(f(x)-f(y))$ for $x,y\in\mathcal F$ then satisfies \ref{ax:dr_antiref} through \ref{ax:dr_trans}.

\end{theorem}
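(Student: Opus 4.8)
The plan is to unfold the definition of the output neuron and reduce each of the three properties to an elementary statement about the scalar activation function. Write $o_1(v)=g(\langle w,v\rangle)$ for $v\in\mathbb R^n$, where $w\in\mathbb R^n$ is the weight vector of the output neuron (there is no bias term, by assumption) and $g:\mathbb R\to\mathbb R$ is its activation. The hypotheses on $g$ translate to: $g$ is \emph{antisymmetric}, $g(-t)=-g(t)$ for all $t$, and \emph{sign conserving}, i.e. $\mathrm{sign}(g(t))=\mathrm{sign}(t)$, so that in particular $g(t)\ge 0\iff t\ge 0$. Setting $r(x,y)=o_1(f(x)-f(y))=g(\langle w,f(x)-f(y)\rangle)$, everything then follows from the additivity of the inner product in its second argument.

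For reflexivity \ref{ax:dr_antiref}: $r(x,x)=g(\langle w,0\rangle)=g(0)$, and antisymmetry of $g$ forces $g(0)=-g(0)$, hence $g(0)=0$. For antisymmetry \ref{ax:dr_antisym}: using $f(x)-f(y)=-(f(y)-f(x))$ together with linearity, $\langle w,f(x)-f(y)\rangle=-\langle w,f(y)-f(x)\rangle$, so $r(x,y)=g(-\langle w,f(y)-f(x)\rangle)=-g(\langle w,f(y)-f(x)\rangle)=-r(y,x)$.

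For transitivity \ref{ax:dr_trans}, introduce the shorthand $s(x,y):=\langle w,f(x)\rangle-\langle w,f(y)\rangle=\langle w,f(x)-f(y)\rangle$, which is additive along chains: $s(x,y)+s(y,z)=s(x,z)$. Sign conservation gives $r(x,y)\ge 0\iff s(x,y)\ge 0$. Hence if $r(x,y)\ge 0$ and $r(y,z)\ge 0$, then $s(x,y)\ge 0$ and $s(y,z)\ge 0$, so $s(x,z)=s(x,y)+s(y,z)\ge 0$, and applying sign conservation once more yields $r(x,z)=g(s(x,z))\ge 0$.

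There is essentially no hard step here; the only thing to be careful about is spelling out precisely what ``sign conserving'' means, so that it delivers the two-way implication $g(t)\ge 0\iff t\ge 0$ (the boundary case $t=0$ being already covered by $g(0)=0$), and observing that it is exactly the linearity of the pre-activation map $v\mapsto\langle w,v\rangle$ — that is, the absence of a bias — that makes $s$ additive and hence makes the transitivity argument go through. If one replaced $o_1$ by a deeper subnetwork rather than a single bias-free neuron, this additivity would fail in general, which is precisely why the theorem restricts the ranking part to one such neuron.
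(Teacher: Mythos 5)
Your proof is correct and follows essentially the same route as the paper's: decompose the output as a sign-conserving, antisymmetric activation applied to the linear (bias-free) score difference $\langle w,f(x)\rangle-\langle w,f(y)\rangle$, then get antisymmetry from oddness of the activation and transitivity from the telescoping additivity of that difference. The only cosmetic difference is that you prove reflexivity directly from $g(0)=0$ while the paper deduces it from the already-established antisymmetry $r(x,x)=-r(x,x)$; both are immediate.
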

\begin{proof}
 Let the activation function of the output neuron be $\tau:\mathbb R\to\mathbb R$ with $\tau(-x)=-\tau(x)$ and $\text{sign}(\tau(x))=\text{sign}(x)$ as required.
\begin{enumerate}[label=(\Roman*),leftmargin=\widthof{(III)n}]
  \item If \ref{ax:dr_antisym} is fulfilled, then \ref{ax:dr_antiref} is trivially so because
  \[
   r(x,x)=-r(x,x)\forall x\in\mathcal F\Rightarrow r(x,x)\equiv0\,.
  \]
  
\item The two networks $nn_1$ and $nn_2$ are identical (as they share the same 
parameters).
Hence, they implement the same function $f:\mathcal F\to\mathbb R^n$.  The output of the complete network for the two input vectors $x,y\in\mathcal F$ is then given by: 
  \begin{align}
   r(x,y)=&\tau[w(f(x)-f(y))]=\tau[wf(x)-wf(y)]
   =:\tau[g(x)-g(y)]\,,\label{eq:g}
  \end{align}
  where $w$ is a weight vector for the output neuron and $g:\mathcal F\to\mathbb R$. This is antisymmetric for $x$ and $y$, thus satisfying the second condition \ref{ax:dr_antisym}.
  \item Let $x,y,z\in\mathcal F$, $r(x,y)\ge0,r(y,z)\ge0$, and let $g$ be defined as in \Cref{eq:g}. Since $\tau$ is required to retain the sign of the input, i.e. $\tau(x)\ge0\Leftrightarrow x\ge0$, $g(x)\ge g(y)$ and $g(y)\ge g(z)$, one finds
  \begin{align*}
   r(x,z)=&\tau[g(x)-g(z)] 
   =\tau\big[\underbrace{g(x)-g(y)}_{\ge0}+\underbrace{g(y)-g(z)}_{\ge0}\big]\ge0\,.
  \end{align*}
  Thus, $r$ is transitive and \ref{ax:dr_trans} is fulfilled.\qed
 \end{enumerate}
\end{proof}
These properties offer some advantages during the training phase of the networks for the distinction of different relevance classes:
\begin{enumerate}[label=(\roman*),leftmargin=\widthof{(iii)n}]
 \item Due to antisymmetry, it is sufficient to train the network by always feeding instances with higher relevance in one and instances with lower relevance in the other input, i.e. higher relevance always in $i_1$ and lower relevance always in $i_2$ or vice versa. 
 \item Due to transitivity, it is not necessary to
  compare very relevant and very irrelevant documents directly during 
  training. Provided that every document is
  trained at least with documents from the corresponding neighbouring
  relevance classes, the model can implicitly be trained for all
  combinations, given that all classes are represented in the training
  data.
   \item Although it might seem to be sensible to train the model such
  that it is able to predict the equality of two different documents
  of the same relevance class, the model is actually restricted when
  doing so: If the ranker is used to sort a list of documents
  according to their relevance, there is no natural order of documents
  within the same class. Hence, the result of the ranker is not
  relevant for equivalent documents. Furthermore, when only documents
  of different classes are paired in training, the optimizer employed
  in the training phase has more freedom to find an optimal solution for ranking
  relevant cases, potentially boosting the overall
  performance.
\end{enumerate}
Note that the DirectRanker is a generalization of the RankNet model \cite{Burges:2005:LRU:1102351.1102363}, which is equivalent to the DirectRanker if the activation of $o_1$ is $\tau(x)=\tanh\left(\frac x2\right)$, and if a cross entropy cost, and a gradient descent optimizer are chosen, which are free parameters in our model.
\\ \\
For simplicity, we will from now on choose the activation function to be $\tau\equiv\text{id}$. This can be done without loss of generality, since the activation function does not change the order,  if $\tau$ is sign conserving.
\\ \\
In the following, we try to put the DirectRanker on a more sound basis by analyzing some cases in which the Direct\-Ranker is able to approximate an order (given enough complexity and training samples). More precisely, we present some cases in which the following conditions are guaranteed to be met: 
\begin{enumerate}[label=(\roman*),leftmargin=\widthof{(ii)n}]
 \item There exists an order $\succeq$ satisfying \ref{ax:succ_antiref}-\ref{ax:succ_trans} on the feature space $\mathcal F$. 
 \item A given order $\succeq$ on $\mathcal F$ can be represented by the DirectRanker, i.e. there is a continuous function $r:\mathcal F\times\mathcal F\to\mathbb R$ implementing the axioms \ref{ax:dr_antiref}-\ref{ax:dr_trans} and which can be written as $r(x,y)=g(x)-g(y)$ on the whole feature space. 
\end{enumerate}
By the universal approximation theorem \cite{hornik89}, the second condition implies that $r$ can be approximated to arbitrary precision by the DirectRanker. In the following, we will discuss interesting cases, in which these assumptions are valid:
\begin{theorem}
 For every countable feature space $\mathcal F$ there exists an order $\succeq$ that is reflexive, antisymmetric, and transitive.
\end{theorem}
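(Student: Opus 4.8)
The plan is to reduce the problem to the linear order on $\mathbb{N}$ via an enumeration of $\mathcal{F}$. Since $\mathcal{F}$ is countable there is an injection $g\colon\mathcal{F}\to\mathbb{N}\subseteq\mathbb{R}$ (if $\mathcal{F}=\emptyset$ there is nothing to prove, and if $\mathcal{F}$ is finite one simply numbers its elements). Using this $g$ I would define the relation $x\succeq y:\Leftrightarrow g(x)\ge g(y)$ and then verify that it inherits the three required properties directly from the total order $\le$ on $\mathbb{R}$.

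The verification is the bulk of the (short) proof, and I would carry it out condition by condition. Reflexivity \ref{ax:succ_antiref} holds because $g(x)\ge g(x)$. For \ref{ax:succ_antisym}, observe that $x\nsucceq y$ means $g(x)<g(y)$ — here one uses that $\le$ is a \emph{total} order on $\mathbb{R}$ — and hence $g(y)\ge g(x)$, i.e.\ $y\succeq x$. Transitivity \ref{ax:succ_trans} is immediate from transitivity of $\ge$: from $g(x)\ge g(y)$ and $g(y)\ge g(z)$ we obtain $g(x)\ge g(z)$, i.e.\ $x\succeq z$. I would also add the remark that, because $g$ is injective, $\succeq$ is in fact a genuine linear order on $\mathcal{F}$, and that setting $r(x,y):=g(x)-g(y)$ exhibits this order in exactly the representable form $r(x,y)=g(x)-g(y)$ discussed after the theorem, so the DirectRanker can in principle learn it.

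I do not expect a real obstacle here: the statement is essentially a repackaging of the fact that countable sets embed into $\mathbb{R}$. The one point worth being careful about is that condition \ref{ax:succ_antisym} is a completeness/totality requirement rather than antisymmetry in the usual order-theoretic sense, so the argument must invoke totality of $\le$ on $\mathbb{R}$ (equivalently, the linearity of the target together with the existence of the scoring map) at precisely that step; everything else is routine.
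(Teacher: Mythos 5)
Your proposal is correct and follows exactly the paper's own argument: take an injection $g\colon\mathcal F\to\mathbb N$ and pull back the order on $\mathbb N$ via $x\succeq y:\Leftrightarrow g(x)\ge g(y)$. You merely spell out the verification of the three conditions (and rightly note that condition (B) is really totality, inherited from the totality of $\le$ on $\mathbb R$), which the paper leaves implicit.
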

\begin{proof}
 By definition, for a countable set $\mathcal F$, there exists an injective function $g:\mathcal F\to\mathbb N$. Therefore, choose $x\succeq y:\Leftrightarrow g(x)\ge g(y)$ for $x,y\in\mathcal F$.\qed
\end{proof}
In fact, every sorting of the elements of countable sets satisfies \ref{ax:succ_antiref}-\ref{ax:succ_trans}, and as we show in the next theorem, it can be approximated by the direct ranker, if the set is uniformly dense:
\begin{theorem}\label{thm:isolated}
 
 Let $\succeq$ implement \ref{ax:succ_antiref}-\ref{ax:succ_trans} on an uniformly discrete feature space $\mathcal F$. Then, the DirectRanker can approximate a function that represents $\succeq$.
\end{theorem}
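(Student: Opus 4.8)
The plan is to reduce the statement, via the universal approximation remark just above the theorem, to the purely order-theoretic task of building a single function $g:\mathcal F\to\mathbb R$ with $x\succeq y\Leftrightarrow g(x)\ge g(y)$; once such a $g$ exists, $r(x,y):=g(x)-g(y)$ automatically satisfies \ref{ax:dr_antiref}--\ref{ax:dr_trans} and has the required product form, so condition (ii) of the preceding discussion is met. Two preliminary observations make this clean. First, axiom \ref{ax:succ_antisym} says exactly that $\succeq$ is total (for all $x,y$ we have $x\succeq y$ or $y\succeq x$), so together with \ref{ax:succ_antiref} and \ref{ax:succ_trans} the relation $\succeq$ is a total preorder; quotienting by the equivalence $x\sim y:\Leftrightarrow(x\succeq y\wedge y\succeq x)$ yields a set of classes that is \emph{linearly} ordered by the induced relation. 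Second, on a uniformly discrete $\mathcal F$ the subspace topology is discrete, so \emph{any} $g:\mathcal F\to\mathbb R$ is continuous on $\mathcal F$ and no extra regularity need be arranged at this stage.

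Next I would show $\mathcal F$ is countable: if $\|x-y\|\ge\delta$ for all distinct $x,y\in\mathcal F$, the balls $B(x,\delta/2)$, $x\in\mathcal F$, are pairwise disjoint, and a separable metric space admits at most countably many pairwise disjoint nonempty open sets; hence $\mathcal F$, and with it the set of $\sim$-classes, is (at most) countable. Now comes the one nontrivial step: a countable linear order embeds order-preservingly into $(\mathbb Q,<)$. Concretely, enumerate the classes $c_1,c_2,\dots$ and choose rationals $q_1,q_2,\dots$ inductively, taking $q_{k+1}$ in the open rational interval determined by the position of $c_{k+1}$ relative to $c_1,\dots,c_k$ (nonempty since $\mathbb Q$ is densely ordered without endpoints); this yields an order isomorphism $\phi$ onto $\{q_k\}\subseteq\mathbb Q$. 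Composing $\phi$ with the quotient map $\mathcal F\to\{c_k\}$ defines $g:\mathcal F\to\mathbb R$, and by construction $x\succeq y\Leftrightarrow \phi([x])\ge\phi([y])\Leftrightarrow g(x)\ge g(y)$, so $r(x,y)=g(x)-g(y)$ represents $\succeq$ and trivially obeys \ref{ax:dr_antiref}--\ref{ax:dr_trans}.

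Finally, to justify the word ``approximate'' in the conclusion, I would note that a uniformly discrete set is closed in $\mathbb R^d$, so by the Tietze extension theorem $g$ extends to a continuous $\tilde g:\mathbb R^d\to\mathbb R$; then $r(x,y)=\tilde g(x)-\tilde g(y)$ is genuinely continuous and the universal approximation theorem \cite{hornik89} lets the DirectRanker match it uniformly on any compact set. Since a compact subset of a uniformly discrete set is finite, on any finite collection of documents the finitely many values of $g$ have a positive minimal gap, so an approximation within half that gap reproduces all the signs of $r$, i.e.\ the order $\succeq$ restricted to that collection. The main obstacle is the embedding of the countable linear order of $\sim$-classes into the reals: integer labels do not suffice in general (class structures ordered like $\mathbb Z$ or $\mathbb Q$ rule that out), and it is precisely here that the density of $\mathbb Q$ (Cantor's theorem on countable linear orders) is needed; the remaining ingredients — totality from \ref{ax:succ_antisym}, countability from uniform discreteness, discreteness of the subspace topology, and the Tietze extension — are routine.
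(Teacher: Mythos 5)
Your proof is correct and follows the same overall decomposition as the paper's --- construct a real-valued $g$ representing the preorder on the countable set $\mathcal F$, extend it continuously to $\mathbb R^n$, and invoke the universal approximation theorem --- but you substitute standard named theorems for the paper's hand-rolled constructions at two points. For the order embedding, the paper inserts each new element by taking $\max+1$, $\min-1$, or the midpoint of its two nearest previously placed neighbours; your appeal to Cantor's embedding of a countable linear order into $(\mathbb Q,<)$ is the same inductive insertion in disguise, so nothing changes there beyond concision (and both constructions, contrary to what your last sentence might suggest, handle order types like $\mathbb Z$ or $\mathbb Q$, since the paper's midpoint step is exactly the density trick). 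The genuine divergence is the extension step: the paper builds an explicit continuous extension from bump functions $B_i$ supported on balls of radius $\delta/42$ and verifies continuity of the infinite sum by an $\varepsilon$-$\delta$ argument, whereas you observe that a uniformly discrete set is closed and carries the discrete subspace topology, so every $g$ on it is continuous and Tietze applies; this is shorter and avoids the somewhat delicate continuity check. You also supply something the paper leaves implicit: the observation that a uniform approximation on a compact (hence finite) subset with error below half the minimal gap of $g$-values reproduces all strict comparisons of $\succeq$. (Neither argument can force the approximant to return exactly $0$ on tied pairs, but the theorem only claims approximability of a representing function, so both are fine.) One small caveat: the classical Tietze theorem is stated for bounded functions, and your $g$ need not be bounded if the rationals are chosen carelessly; either cite the unbounded version or embed into $(0,1)\cap\mathbb Q$ rather than all of $\mathbb Q$.
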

\begin{proof}
  First, consider $\mathcal F$ to be an infinite set. We will use the same notation as above to describe the ranking function $r$ in terms of a continuous function $g$ such that $r(x,y)=g(x)-g(y)$. Since we use neural networks to approximate $g$, referring to the universal approximation theorem \cite{hornik89}, it is sufficient to show that a continuous function $g:\mathbb R^n\to\mathbb R$ exists such that $r$ has the required properties. We will show that such a function exists by explicit construction. We can iterate through $\mathcal F$, since it is discrete and, therefore, countable. Now assign each element $x\in\mathcal F$ a value $g(x)\in\mathbb R$. Map the first value $x_0$ to 0, and then iteratively do the following with the $i$-th element of $\mathcal F$:
\begin{enumerate}[label=p1.\arabic*,start=1,leftmargin=\widthof{p1.11}]
  \item If $\exists j : j<i \wedge  x_i\succeq x_j\wedge x_j\succeq x_i$, set $g(x_i):=g(x_j)$ and continue with the next element. \label{proof3:item_one}
  \item If $\forall j$ with $j<i: x_i\succeq x_j$, set $g(x_i):=\max\limits_{j<i}g(x_j)+1$ and continue with the next element.
  \item If $\forall j$ with $j<i: x_j\succeq x_i$, set $g(x_i):=\min\limits_{j<i}g(x_j)-1$ and continue with the next element. If there are $j,k<i$ with $x_k\succeq x_i\succeq x_j$, choose an arbitrary ``largest element smaller than $x_i$'', i.e. an element $x_l\in\mathcal F,l<i$ satisfying $x_i\succeq x_l\succeq x\forall x\in\{x_j\in\mathcal F|j<i,x_j\nsucceq x_i\}$, and an arbitrary ``smallest element larger than $x_i$'', i.e. an element $x_g\in\mathcal F,g<i$ such that $x\succeq x_g\succeq x_i\forall x\in\{x_k\in\mathcal F|k<i,x_i\nsucceq x_k\}$. Then set $g(x_i):=\frac{g(x_l)+g(x_g)}2$ and continue with the next element. This is well-defined since steps 1 through 3 guarantee that every $x_l$ that can be chosen this way is mapped to the same value by $g$. Analogously for $x_g$.
 \end{enumerate}

One easily sees that this yields a function $g$ for which $g(x)\ge g(y)\Leftrightarrow x\succeq y\,\forall x,y\in\mathcal F$ and thus $r(x,y)=g(x)-g(y)\ge0\Leftrightarrow x\succeq y$.
\\ \\
Next, we expand $g$ to a continuous function in $\mathbb R^n$. Since $\mathcal F$ is uniformly discrete, $\exists\delta>0\forall i\in\mathbb N:B_\delta(x_i)\cap\mathcal F=\{x_i\}$, where $B_\delta(x_i):=\{x\in\mathbb R^n|\|x-x_i\|<\delta\}$. For every $i\in\mathbb N$ define $\tilde B_i:\overline{B_{\delta/42}(x_i)}\to\mathbb R$, $x\mapsto1-\frac{42\|x-x_i\|}\delta$. $\tilde B_i$ is obviously continuous on $\overline{B_{\delta/42}(x_i)}$. Expanding this definition to
\[
 B_i(x):=\begin{cases}
          \tilde B_i(x) & \text{if }x\in\overline{B_{\delta/42}(x_i)} \\
          0 & \text{else}
         \end{cases}
\]
allows us to define a function $g_c:\mathbb R^n\to\mathbb R$, $x\mapsto\sum_{i=1}^\infty g(x_i)B_i(x)$ which results in the same value as $g$ for all relevant points $x_i$. This can easily be checked, since $B_n(x_m)=\delta_{mn}$ (using the Kronecker-delta). Thus, it still represents $\succeq$ on $\mathcal F$. $B_i$ is continuous since $B_i|_{\overline{B_{\delta/42}}}=\tilde B_i$ and $B_i|_{\mathbb R^n\backslash B_{\delta/42}}\equiv0$ are continuous and, therefore, $B_i$ is continuous on the union of these closed sets. We now show  that $g_c$ is continuous using the $\varepsilon$-$\delta$-definition:
\\ \\
Let $\varepsilon>0$ and $x\in\mathbb R^n$. If there is no $n\in\mathbb N$ such that $x\in\overline{B_{\delta/42}(x_i)}$, we can choose $\tilde\delta>0$ such that $B_{\tilde\delta}(x)\cap\overline{B_{\delta/42}(x_i)}=\emptyset$ $\forall n\in\mathbb N$ since $\mathcal F$ is uniformly discrete. Therefore, $g_c|_{B_{\tilde\delta}}\equiv0$ and $|g_c(\tilde x)-g_c(x)|=0<\varepsilon\forall\tilde x\in B_{\tilde\delta}(x)$.
\\ \\
If there is such an $n$, then $B_{\delta/4}(x)\cap \overline{B_{\delta/42}(x_i)}$ is non-empty, if and only if $n=i$. Hence, we can choose $\frac\delta4>\tilde\delta>0$, such that $|g_c(\tilde x)-g_c(x)|<\varepsilon\forall\tilde x\in B_{\tilde\delta}(x)$ since $g_c|_{B_{\delta/4}(x_i)}=g(x_i)\cdot B_i|_{B_{\delta/4}(x_i)}$ is clearly continuous. Therefore, for every $\varepsilon>0$ and $x\in\mathbb R^n$ we can find $\tilde\delta>0$ such that $|g_c(\tilde x)-g_c(x)|<\varepsilon\forall\tilde x\in B_{\tilde\delta(x)}$, i.e., $g_c$ is continuous.
\\ \\
If $\mathcal F$ is finite with $N$ elements, set $g(x_k)=0$ for $k>N$. Then the proof works analogously as the above.
\qed
\end{proof}
Therefore, it theoretically feasible to successfully train the DirectRanker on any finite dataset, and consequently on any real-world dataset. However, the function $g$ might be arbitrarily complex depending on the explicit order. In real-world applications, the desired order is usually not discrete and the task at hand is to predict the order of elements {\em not} represented in the training data. In the following, we give a reasonably weak condition for which an order $\succeq$ can be approximated by the DirectRanker on more general feature spaces:

\begin{theorem}\label{thm:open}
 Let $\succeq$ implement \ref{ax:succ_antiref}-\ref{ax:succ_trans} and $\mathcal F\subset\mathbb R^n$ be convex and open. For every $x\in\mathcal F$ define 
 \begin{gather*}
 \mathcal P_x:=\{y\in\mathcal F|x\nsucceq y\},~\mathcal N_x:=\{y\in\mathcal F|y\nsucceq x\},~
 \partial_x:=\{y\in\mathcal F|x\succeq y\wedge y\succeq x\}\,.
 \end{gather*}
 Furthermore, let $(\mathcal F/\sim,d)$ be a metric space, where $x\sim y\Leftrightarrow y\in\partial_x$ and 
 \[
  d(\partial_x,\partial_y)=\metric{x}{y}. 
 \]
 Then, the DirectRanker can approximate $\succeq$ if $\mathcal P_x$ and $\mathcal N_x$ are open for all $x\in\mathcal F$.

\end{theorem}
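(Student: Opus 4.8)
The plan is to reduce the claim, as in the proof of \Cref{thm:isolated}, to producing a single \emph{continuous} function $g:\mathcal F\to\mathbb R$ with $g(x)\ge g(y)\Leftrightarrow x\succeq y$: once such a $g$ is available, $r(x,y):=g(x)-g(y)$ is continuous, satisfies \ref{ax:dr_antiref}--\ref{ax:dr_trans}, and has exactly the form required in condition (ii), so the universal approximation theorem \cite{hornik89} finishes the argument. Any such $g$ automatically has $g(x)=g(y)\Leftrightarrow x\sim y$, hence factors as $g=\bar g\circ\pi$ through $\pi:\mathcal F\to\mathcal F/\!\sim$, and $(\mathcal F/\!\sim,\preceq)$ with $\partial_x\preceq\partial_y:\Leftrightarrow x\succeq y$ is by \ref{ax:succ_antiref}--\ref{ax:succ_trans} a genuine \emph{total} order; so what is really being asked for is a continuous, strictly order-preserving $\bar g$ on $\mathcal F/\!\sim$, i.e.\ a continuous utility representation of the total preorder $\succeq$, whose strict upper and lower sets $\mathcal P_x,\mathcal N_x$ are open by hypothesis. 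One may either quote the classical Eilenberg--Debreu representation theorem or argue directly along the lines below.

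I would first isolate two ingredients. \emph{Separability}: $\mathcal F\subseteq\mathbb R^n$ is a subspace of a separable metric space, so it carries a countable dense set $D=\{q_1,q_2,\dots\}$. \emph{No order jumps}: $\mathcal F$ is convex, hence connected, and since every $\mathcal P_x,\mathcal N_x$ is open one checks that $\pi$ is continuous when $\mathcal F/\!\sim$ is given the order topology; thus $\mathcal F/\!\sim$ is a connected linearly ordered space, so between any two of its points there lies a third (a jump would disconnect it). Together these give that $\pi(D)$ is \emph{order-dense} in $\mathcal F/\!\sim$: for $\partial_x\prec\partial_y$ the set $\mathcal P_x\cap\mathcal N_y=\pi^{-1}(\{\xi:\partial_x\prec\xi\prec\partial_y\})$ is nonempty (no jumps) and open (hypothesis), hence meets $D$.

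With an order-dense countable $\pi(D)$ in hand I would define $g$ in the classical manner, for instance
\[
 g(x)\ :=\ \sum_{i:\,q_i\prec x}2^{-i}\ +\ \tfrac12\!\!\sum_{i:\,q_i\sim x}\!\!2^{-i},
\]
a uniformly convergent series which is weakly order-preserving by inspection; order-density of $\pi(D)$ upgrades this to strict order-preservation, so $g(x)\ge g(y)\Leftrightarrow x\succeq y$. The remaining step --- which I expect to be the main obstacle --- is the \emph{continuity} of $g$: one must show that each superlevel set $\{x:g(x)>a\}$ is a union of sets $\mathcal P_{q_i}$ and each sublevel set $\{x:g(x)<a\}$ a union of sets $\mathcal N_{q_i}$ (the behaviour at a greatest or least class of the order being handled separately, using that the classes $\partial_z=\mathcal P_z^{c}\cap\mathcal N_z^{c}$ are closed), and it is precisely here that the openness of $\mathcal P_{q_i},\mathcal N_{q_i}$ and the absence of jumps inherited from convexity are genuinely used. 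The hypothesis that $(\mathcal F/\!\sim,d)$ is a metric space enters as the concrete bookkeeping tool --- for instance, if one prefers, to replace the discontinuous indicator weights above by continuous weights built from $d$-distances to the closed sets $\overline{\mathcal P_{q_i}/\!\sim}$ and $\overline{\mathcal N_{q_i}/\!\sim}$, each of which is separated from the opposite open region. I would expect verifying that these modifications can be performed simultaneously over all $i$ without destroying uniform convergence or faithfulness of the order to be the only delicate point; the rest is routine.
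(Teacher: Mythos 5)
Your overall strategy is sound and genuinely different from the paper's. The paper constructs $g$ explicitly as a signed distance to a fixed reference class, $g(x)=\pm d(\partial_{x_0},\partial_x)$, and then uses convexity very concretely --- straight-line segments $\gamma(t)=tx+(1-t)x_0$ and triangle-type inequalities for the infimum distance --- to verify both continuity and order-faithfulness; the hypothesis that $d$ is a metric on $\mathcal F/\!\sim$ is load-bearing there. You instead recognize the statement as a continuous utility representation problem for a continuous total preorder on a connected separable space and reduce it to Eilenberg--Debreu; convexity enters only through connectedness (no jumps), and the metric hypothesis is not needed at all. The branch of your argument that simply cites Eilenberg's theorem is a complete and arguably cleaner proof, and your verification of its hypotheses (totality from \ref{ax:succ_antisym}, order-density of $\pi(D)$ from connectedness plus openness and nonemptiness of $\mathcal P_x\cap\mathcal N_y$) is correct.

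The self-contained branch, however, has a genuine gap exactly where you suspect one: the dyadic sum $g$ is in general \emph{not} continuous, and your proposed repair --- showing $\{g>a\}$ is a union of sets $\mathcal P_{q_i}$ and $\{g<a\}$ a union of sets $\mathcal N_{q_i}$ --- is provably false for this $g$. Lower semicontinuity does work (a finite subfamily of indices with $q_i\prec x$ and $\sum_{i}2^{-i}>a$ forces $g>a$ on the open set $\bigcap_i\mathcal P_{q_i}$), but upper semicontinuity fails whenever some equivalence class $\partial_z$ has nonempty interior: it then contains some $q_i$, and every $y\succ z$ satisfies $g(y)\ge g(z)+\frac12\sum_{q_j\sim z}2^{-j}$, so $g$ jumps at any boundary point of $\partial_z$ that is a limit of points of $\mathcal P_z$. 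Concretely, take $n=1$, $\mathcal F=\mathbb R$ and $\succeq$ induced by a continuous nondecreasing $f$ that is constant exactly on $[0,1]$: all $\mathcal P_x,\mathcal N_x$ are open, yet your $g$ is discontinuous at $1$, and the sublevel sets cutting through that jump are not unions of sets $\mathcal N_{q_i}$. Closing this requires an additional idea --- Debreu's gap lemma applied to the range of $g$, or an equivalent post-composition that removes the half-open gaps --- and that is precisely the nontrivial content of the representation theorem, not routine bookkeeping with the distances $d$. So either quote Eilenberg/Debreu outright, or import the gap lemma explicitly.
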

\begin{proof}
 First, note that $\partial_x=\mathcal F\backslash(\mathcal P_x\cup\mathcal N_x)$ and that $\mathcal P_x\cap\mathcal N_x=\emptyset$ because of the antisymmetry of $\succeq$, dividing $\mathcal F$ into three distinct subsets $\mathcal P_x,\mathcal N_x,\partial_x$. The relation $x\sim y\Leftrightarrow y\in\partial_x$ defines an equivalence relation with equivalence classes $[x]=\partial_x$. \\
 According to the universal approximation theorem it is sufficient to show that a continuous function $g:\mathcal F\to\mathbb R$ exists such that $\forall x,y\in\mathcal F:x\succeq y\Leftrightarrow g(x)\ge g(y)$. Again this will be done by explicit construction: \\
 Now, define $g:\mathcal F\to\mathbb R$ by first choosing an arbitrary $x_0\in\mathcal F$. For all points, set
 \[
  g(x)=\begin{cases}
   d(\partial_{x_0},\partial_x) & \text{if }x\in\mathcal P_{x_0} \\ 
   -d(\partial_{x_0},\partial_x) & \text{if }x\in\mathcal N_{x_0} \\
   0 & \text{if }x\in\partial_{x_0}
  \end{cases}~~\forall x\in\mathcal F
\]
 First, let us show, that $g$ is continuous: \\
 Consider $d_{\partial_{x_0}}:(\mathcal F/\!\!\!\sim)\to\mathbb R$, $\partial_x\mapsto d(\partial_{x_0},\partial_x)$. Since $d$ as a metric is continuous, we know that $d_{\partial_{x_0}}$ is, too. Thus, for every $\partial_x\in\mathcal F/\!\!\!\sim$ and every $\varepsilon>0$ there is a $\delta>0$ such that for every $\partial_y$ with $d(\partial_x,\partial_y)<\delta$ $|d_{\partial_{x_0}}(\partial_x)-d_{\partial_{x_0}}(\partial_y)|<\varepsilon$. Therefore, for a given $\varepsilon>0$ and $x\in\mathcal F$ we can always choose $\delta$ such that for every $y\in\mathcal F$ with $\|x-y\|<\delta$ the above holds for $\partial_x$ and $\partial_y$. Then, $|\,|g(x)|-|g(y)|\,|=|d_{\partial_{x_0}}(\partial_x)-d_{\partial_{x_0}}(\partial_y)|<\varepsilon$. Thus, $|g|$ is continuous, which means that $g$ is continuous on $\mathcal P_{x_0}\cup\partial_{x_0}=\mathcal F\backslash\mathcal N_{x_0}$ and $\mathcal N_{x_0}\cup\partial_{x_0}=\mathcal F\backslash\mathcal P_{x_0}$. Since these two sets are closed in $\mathcal F$, this implies that $g$ is continuous on their union $\mathcal F=\mathcal P_{x_0}\cup\partial_{x_0}\cup\mathcal N_{x_0}$. 
\\ \\
 Finally, we need to show now that $r(x,y)=g(x)-g(y)$ represents $\succeq$, i.e. $x\succeq y\Leftrightarrow g(x)\ge g(y)$: \\
 For the case $x\succeq x_0\succeq y$ the equivalence is obvious from the definition of $g$. For the remaining cases concerning $x\succeq y\Rightarrow g(x)\ge g(y)$, suppose there are elements $x,y\in\mathcal P_{x_0}$ with $x\succeq y$ and $g(x)<g(y)$. We can then choose a continuous curve $\gamma:[0,1]\to\mathcal F$ with $\gamma(0)=x_0\in\mathcal N_y$ and $\gamma(1)=x\in\mathcal P_y\cup\partial_y$. Choose $t_0:=\sup\{t\in[0,1]|\gamma(t)\in\mathcal N_y\}$. Then, for every neighborhood $U$ of $t_0$, there are $t',t''\in U$, such that $\gamma(t')\in\mathcal N_y$, $\gamma(t'')\notin\mathcal N_y$. Therefore, $\gamma(t_0)$ is a boundary point of $N_y$ and therefore no element of $N_y$. Also, since $\mathcal P_y\cap\mathcal N_y=\emptyset$, $\gamma(t_0)$ is no inner point of $P_y$ and therefore no element of $P_y$. Thus $\gamma(t_0)=:\tilde y\in\mathcal F\backslash(\mathcal P_y\cup\mathcal N_y)=\partial_y$. If $y\succeq x$, $t_0$ may equal 1. \\
 Since this holds for any continuous curve $\gamma$ with the given boundary conditions, we can choose $\gamma(t)=tx+(1-t)x_0$. The length of $\gamma|_{[0,t_0]}$ is then given by
 \begin{align*}
  \|\tilde y-x_0\|=&L(\gamma|_{[0,t_0]})=\int_0^{t_0}\|\dot\gamma(t)\|dt\le\int_0^1\|\dot\gamma(t)\|dt=\|x-x_0\|
\end{align*}
The same argument holds when replacing $x$ and $x_0$ by arbitrary $x'\in\partial_x$ and $x_0'\in\partial_{x_0}$. Therefore, $\forall x'\in\partial_x,x_0'\in\partial_{x_0}\exists \tilde y_{x',x_0'}\in\partial_y:\|y_{x',x_0'}-x_0'\|\le\|x'-x_0'\|$, and therefore
 \begin{align*}
  g(y)=&d(\partial_y,\partial_{x_0})=\metric y{x_0}\le\metric x{x_0}=d(\partial_x,\partial_{x_0})=g(x)
 \end{align*}
 which contradicts the assumption that $g(x)<g(y)$. Following a similar chain of reasoning, the case $g(x)<g(y)$ for $x,y\in\mathcal N_{x_0}$ leads to contradictions. 
 \\ \\
 To see that $g(x)\ge g(y)\Rightarrow x\succeq y$, suppose $\exists x,y\in\mathcal F, g(x)\ge g(y), x\nsucceq y$. Because of the antisymmetry of $\succeq$, this implies $y\succeq x$, which (as shown above) leads to $g(y)\ge g(x)$. This however, is only possible if $g(x)=g(y)$. 
 \\ \\
 Analogous to the above, for every curve $\gamma(t)=t x'+(1-t)x_0'$ for $t\in[0,1]$ with  $x'\in\partial_x$, $x_0'\in\partial_{x_0}$, there exists a $\tilde y_{x',x_0'}\in\partial_y$ and $t_0\in]0,1[$ such that $\gamma(t_0)=\tilde y_{x',x_0'}$. We also know that
 \begin{align*}
  \|x'-x_0'\|=&\int_0^1\|\dot\gamma(t)\|dt=\int_0^{t_0}\|\dot\gamma(t)\|dt+\int_{t_0}^1\|\dot\gamma(t)\|dt=\|\tilde y_{x',x_0'}-x_0'\|+\|x'-\tilde y_{x',x_0'}\|
 \end{align*}
 However, $g(x)=g(y)\Leftrightarrow\metric x{x_0}=\metric y{x_0}$ implies
 \begin{align*}
  \metric y{x_0}\le&\inf\limits_{x'\in\partial_x,x_0'\in\partial_{x_0}}\|\tilde y_{x',x_0'}-x_0'\|\le\metric x{x_0}\le\metric y{x_0}
 \end{align*}
 This means, that $\inf\limits_{x'\in\partial_x,x_0'\in\partial_{x_0}}\|\tilde y_{x',x_0'}-x_0'\|=\metric x{x_0}$, which implies 
 \begin{align*}
  \metric x{x_0}=&\inf\limits_{x'\in\partial_x,x_0'\in\partial_{x_0}}(\|\tilde y_{x',x_0'}-x_0'\|+\|x'-\tilde y_{x',x_0'}\|) \\
  \ge&\inf\limits_{x'\in\partial_x,x_0'\in\partial_{x_0}}\|\tilde y_{x',x_0'}-x_0'\|+\inf\limits_{x'\in\partial_x,x_0'\in\partial_{x_0}}\|x'-\tilde y_{x',x_0'}\| \\
  \ge&\metric x{x_0}+\inf\limits_{x'\in\partial_x,x_0'\in\partial_{x_0}}\|x'-\tilde y_{x',x_0'}\|
 \end{align*}
 \begin{gather*}
  \Rightarrow d(\partial_x,\partial_y)=\metric xy \le\inf\limits_{x'\in\partial_x,x_0'\in\partial_{x_0}}\|x'-\tilde y_{x',x_0'}\|\le0 \\
  \Rightarrow d(\partial_x,\partial_y)=0
 \end{gather*}
 Which contradict the assumption $x\nsucceq y$ that implies $\partial_x\ne\partial_y$.
 \qed
\end{proof}
If there are no two documents with the same features but different relevances, any finite dataset can be extended to $\mathbb R^n$ such that the conditions for \Cref{thm:open} are met. In real-world applications, i.e. applications with noisy data, it is in general possible that $\mathcal P_x$, $\mathcal N_x$, and $\partial_x$ blur out and overlap. In this case, it is of course impossible to find any function that represents $\succeq$. However, the Di\-rect\-Ran\-ker still ought to be able to find a ``best fit'' of a continuous function that maximizes the predictive power on any new documents, even if some documents in the training set are mislabeled. Experiments investigating this are discussed in \Cref{sec:results_datasets}.

\section{Experimental Setup}\label{sec:experi-setup}
To evaluate the performance of our model and to compare it to other learning to rank approaches, we employ common evaluation metrics and standard datasets (MSLR-WEB10K, MQ2007, MQ2008 \cite{mslr}). Furthermore we use synthetic data to investigate the dependence of the performance on certain characteristics of the data. Reliable estimates for the performance are gained by averaging over different splits of the dataset using cross-validation on the predefined folds from the the original publications and are compared to other models. In all tests,  we use the TensorFlow library \cite{tensorflow2015-whitepaper} and its implementation of the Adam-Optimizer \cite{adam}.
\\ \\
In \Cref{sec:letor} we briefly describe the structure of the LETOR datasets and in \Cref{sec:gridsearch} how the models are evaluated. In \Cref{sec:setup_datasets} we illustrate how the  synthetic datasets are generated and analyzed. For evaluating different models we apply the commonly used metrics NDCG and MAP. Shown in \Cref{sec:ndcg} and \Cref{sec:map}.

\subsection{The NDCG Metric}\label{sec:ndcg}
In the field of learning to rank, a commonly used measure for the performance of a model is the normalized discounted cumulative gain of top-$k$ documents retrieved (NDCG@$k$). This metric is based on the discounted cumulative gain of top-$k$ documents (DCG@$k$):

\begin{equation}
\text{DCG@}k = \sum_{i=1}^{k} \frac{2^{r(d_i)} - 1}{log_2(i + 1)}
\,,\notag
\end{equation}
where $d_1, d_2, ..., d_n$ is the list of documents sorted by the model with respect to a single query and $r(d_i)$ is the relevance label of document $d_i$. The NDCG@$k$ can be computed by dividing the DCG@$k$ by the ideal (maximum) discounted cumulative gain of top-$k$ documents retrieved (IDCG@$k$), i.e. the DCG@$k$ for a perfectly sorted list of documents is defined as $\text{NDCG@}k = \frac{\text{DCG@}k}{\text{IDCG@}k}$. 

\subsection{The MAP Metric}\label{sec:map}
A frequently used alternative to the NDCG is the mean average precision (MAP). For this metric the precision for the top-$k$ documents of query $q$ is introduced. Since this only makes sense for binary classes, a multiclass system has to be binarized such that $r(d_i)=1$ indicates a relevant, and $r(d_i)=0$ indicates an irrelevant document:
\[
\text{P}_q @ k = \frac1k{\sum_{i=1}^{k} r(d_i)}
\]
Now one needs to calculate the average precision over all \textit{relevant} documents.
\[
\text{AvgP}_q = \frac1{n\cdot P@n}{\sum_{k=1}^{n} r(d_k)\text{P}_q @ k}
\]
where $n$ is the number of documents in the query $q$. Finally, the MAP is given by the mean over all queries:
\[
\text{MAP} = \frac1Q{\sum_{q=1}^{Q} \text{AvgP}_q}
\]
where $Q$ is the number of queries in a data set.

\subsection{The LETOR Datasets}\label{sec:letor}
The {\em Microsoft Learning to Rank Datasets} (LETOR) and especially the MSLR--WEB10K set are standard data sets that are most commonly used to benchmark {\em learning to rank} models. The dataset consists of 10,000 queries and is a subset of the larger MSLR--WEB30K dataset. Each instance in the dataset corresponds to a query-document pair which is characterized by 136 numerical features. Additionally, relevance labels from 0 (irrelevant) to 4 (perfectly relevant) indicate the relevance of the given document with respect to the query. Ranking documents according to their relevance is often simplified by binarizing the relevance labels using an appropriate threshold, as is done by  \cite{Ibrahim:2017:EES:3019612.3019696,Ibrahim2018}. In our case, we map relevance labels $\ge2$ to 1 and relevance labels $\le1$ to 0. We use this approach to compare our model to others. 
\\ \\
Additionally we evaluate the different algorithms on the much smaller MQ2007 and MQ2008 datasets. These are similar in structure to the MSLR--WEB10K set with some minor differences: The relevance labels range from 0 to 2 and each document consists of 46 features. In this case we binarize the relevance labels by mapping labels $\ge1$ to 1 and relevance labels $=0$ to 0.

\subsection{Grid search and LETOR Evaluation}\label{sec:gridsearch}
We perform grid searches for hyperparameter optimization of our model.
The grid searches were performed using the {\em GridSearchCV} class implemented in the {\em scitkit-learn} library \cite{scikit-learn}. One grid search was done to optimize the NDCG@$10$ and one optimizing the MAP. For each hyperparameter point a 5-fold cross validation (internal) was performed on each training set on each of the 5 predefined folds of the datasets. The models were then again trained using the best hyperparameters using the entire training set before averaging the performance on independent test sets over all 5 folds (external).
Before training the model the data was transformed in such a way that the features are following a normal distribution with standard deviation of $\nicefrac13$.
\\ \\
For benchmarking the results, the most common {\em learning to rank} algorithms were also trained and tested with the same method as described above. The implementation of these algorithms are taken from the RankLib library implemented in the Lemur Project \cite{lemur}. The algorithms are: RankNet \cite{Burges:2005:LRU:1102351.1102363}, AdaRank \cite{Xu:2007:ABA:1277741.1277809}, LambdaMART \cite{adapting-boosting-for-information-retrieval-measures}, and ListNet \cite{learning-to-rank-from-pairwise-approach-to-listwise-approach}.
\\ \\
The used datasets contain queries with only non relevant documents for which the evaluation metrics are not defined. Consequently we exclude those queries from the data. Furthermore, there are queries with less than 10 documents. For such queries with $k<10$ documents the NDCG@$k$ is evaluated during the tests.

\subsection{Synthetic Data Generation and Evaluation}\label{sec:setup_datasets}
To study how the DirectRanker performs for differently structured datasets, synthetic data with different characteristics were created and evaluated. To achieve comparability between the different sets, all datasets have the following properties in common: 
\begin{enumerate}[label=(\roman*),leftmargin=\widthof{(iii)n}]
 \item The dataset consists of separate training and test sets which are generated independently,  but with the same parameters. 
 \item For each relevance class, the features follow a Gaussian distribution in feature space with a constant, but random mean between 0 and 100, and a constant, but random standard deviation between 50 and 100. 
 \item Except for a test in which the performance depending on the size of the dataset is studied, all training sets consist of $10^5$ and all test sets consist of $10^4$ documents. 
 \item During training, $r(d_i)(1-o_1(d_i,d_j))^2$ is applied as the cost function as pairs are constructed such that $r(d_i)-r(d_j)=1$.
\end{enumerate}
For the different tests, one parameter describing the dataset is changed and evaluated for different values, keeping all other parameters fixed. These parameters include the size of the training set, the number of relevance classes, the number of features, and noise on the labels. The noise for the labels is generated by assuming a Gaussian for each label with variable standard deviation and rounding to the next integer. This allows for testing larger degrees of confusion between more distant relevance classes. 
\\ \\
The general procedure for the experiments is the following: 
\begin{enumerate}[label=(\arabic*)]
 \item A dataset with the respective parameters is created. 
 \item The DirectRanker is trained on the generated training set using our framework. 
 \item The trained ranker is tested on the generated test set, again using our framework. For this, 50-150 random samples are drawn from the test set. This subset is then sorted using the trained model and the NDCG@20 is calculated. The whole test is repeated 50 times and the average value of NDCG@20 over these 50 random subsets is calculated. 
 \item These three steps are repeated at least four more times to determine a mean value $\mu$ for the NDCG@20, averaged over different datasets with the same characteristics. The standard error is calculated as an uncertainty $\Delta\mu$ of $\mu$.
\asline{A link maybe?}
\end{enumerate}
In the plots showing our test results (\Cref{fig:class_dependence}, \Cref{fig:feature_dependence}, \Cref{fig:size_dependence}, \Cref{fig:mislabeling}), every data point is the result of applying these four steps for one choice of the dataset parameters. $nn_1$ and $nn_2$ consist of a hidden layer with 70 neurons and another one with as many neurons as there are relevance classes. The results of these tests are discussed in \Cref{sec:results_datasets}.

\section{Experimental Results}\label{sec:experi-results}
In this section we present the experimental results. First, we compare our model with the commonly used ones (\Cref{sec:compare-ranker}). Furthermore, we give an outline of the sensitivity on different dataset properties (\Cref{sec:results_datasets}) and then briefly describe a method how pairwise approaches can be used for classification problems (\Cref{sec:rank-classification}).

\subsection{Comparison to other Rankers}\label{sec:compare-ranker}
\begin{table}[htb]
 \centering
 \caption{Performance comparison for different rankers on multiple Letor datasets. The values for ES-Rank, IESR-Rank and IESVM-Rank are taken from \cite{Ibrahim2018}. These values are marked with itelic. LambdaMart was boosted using the corresponding evaluation metric during training.}
 \makebox[\textwidth][c]{
 \begin{tabularx}{\textwidth}{lXXXXXX}
  \hline
  & \multicolumn{2}{c}{MSLR-WEB10K} &  \multicolumn{2}{c}{MQ2008} &  \multicolumn{2}{c}{MQ2007} \\
  
  Algorithm & $\langle\text{NDCG}\rangle$ & $\langle\text{MAP}\rangle$ & $\langle\text{NDCG}\rangle$ & $\langle\text{MAP}\rangle$ & $\langle\text{NDCG}\rangle$ & $\langle\text{MAP}\rangle$\\
  \hline
  DirectRanker & \SI{0.440 \pm 0.004}{} & \SI{0.365 \pm 0.003}{} & \SI{0.72 \pm 0.012}{} & \SI{0.636 \pm 0.011}{} & \SI{0.540 \pm 0.01}{} & \SI{0.534 \pm 0.009}{}\\

  RankNet & \SI{0.157 \pm 0.003}{} & \SI{0.195 \pm 0.002}{} & \SI{0.716 \pm 0.011}{} & \SI{0.642 \pm 0.010}{} & \SI{0.525 \pm 0.011}{} & \SI{0.525 \pm 0.007}{}\\

  ListNet & \SI{0.157 \pm 0.003}{} & \SI{0.192 \pm 0.002}{} & \SI{0.719 \pm 0.010}{} & \SI{0.647 \pm 0.006}{} & \SI{0.526 \pm 0.010}{} & \SI{0.525 \pm 0.009}{}\\

  LambdaMart & \SI{0.476 \pm 0.003}{} & \SI{0.366 \pm 0.003}{} & \SI{0.723 \pm 0.007}{} & \SI{0.624 \pm 0.006}{} & \SI{0.531 \pm 0.012}{} & \SI{0.51 \pm 0.011}{}\\

  AdaRank & \SI{0.400 \pm 0.016}{} & \SI{0.322 \pm 0.01}{} & \SI{0.722 \pm 0.010}{} & \SI{0.653 \pm 0.009}{} & \SI{0.526 \pm 0.010}{} & \SI{0.527 \pm 0.01}{}\\
  
  \textit{ES-Rank} & \textit{0.382} & \textit{0.570} & \textit{0.507} & \textit{0.483} & \textit{0.451} & \textit{0.470}\\

  \textit{IESR-Rank} & \textit{0.415} & \textit{0.603} & \textit{0.517} & \textit{0.494} & \textit{0.455} & \textit{ 0.473}\\

  \textit{IESVM-Rank} & \textit{0.224} & \textit{0.457} & \textit{0.498} & \textit{0.473} & \textit{0.436} & \textit{0.456}\\
  \hline
 \end{tabularx}
 }

 \label{tab:rankercomparison1}
\end{table}
In \Cref{tab:rankercomparison1} the results of different models on the datasets discussed in \Cref{sec:letor} are shown. On the MQ2007 and MQ2008 datasets the differences between the models are insignificant ($0.54\sigma$ difference in the NDCG@10 between the best and worst performing algorithms on the MQ2008 dataset) making the experiments on these sets inconclusive. However, the results on the MSLR--WEB10K set differ significantly. Here LambdaMart outperforms the DirectRanker by $7.2 \sigma$ on the NDCG@10. On the MAP however, the difference is only $0.2 \sigma$. It is important to note that for LambdaMart the model was explicitly boosted on NDCG@10 and MAP respectively for the two performance values while the DirectRanker uses a cost function independent of the evaluation metric. On the MSLR--WEB10K set the DirectRanker outperforms all other methods by at least $2.4 \sigma$ (NDCG@10) or $3.2\sigma$ (MAP).

\begin{table}[htb]
 \centering
 \caption{Comparing the average run times over the five folds of the MSLR--WEB10K data set in seconds. The values with $*$ were trained on the machine mentioned in the text. The values with $\dagger$ were trained using RankLib. The values with $\ddagger$ are taken from \cite{Ibrahim2018}.}
 \begin{tabular}{lc}
  \hline
  Algorithm & time in seconds\\
  \hline
  DirectRanker$^{*}$ & \SI{151.94 \pm 0.41}{}\\

  RankNet$^{*\dagger}$ & \SI{2215 \pm 351}{}\\
  
  AdaRank$^{*\dagger}$ & \SI{1261 \pm 50}{}\\
  
  LambdaMART$^{*\dagger}$ & \SI{2664 \pm 234}{}\\
  
  ListNet$^{*\dagger}$ & \SI{3947 \pm 481}{}\\
  
  ES-Rank$^{\ddagger}$ & \SI{1800 }{}\\

  IESR-Rank$^{\ddagger}$ & \SI{1957}{} \\

  IESVM-Rank$^{\ddagger}$ & \SI{34209}{} \\
  \hline
 \end{tabular}
 
 \label{tab:runtime}
\end{table}
 To demonstrate the simplicity of the DirectRanker, we present experiments on the runtime for the model training in \Cref{tab:runtime}. All tests performed by us have been conducted on an Intel\textsuperscript\textregistered\ Core\texttrademark\ i7-6850K CPU @ 3.60GHz using the above mentioned MSLR--WEB10K dataset averaging over the five given folds. Our model was trained using TensorFlow \cite{tensorflow2015-whitepaper}, contrary to the other implementations. This makes the comparison of the run times difficult, however, we also reimplemented RankNet using TensorFlow in the same way as our model. Here it can be seen that the runtime of the DirectRanker and RankNet are of the same order. Thus, we do not boost the training time of the model but only the performance. On the other hand the training time beats all the other models. \footnote{For our implementation of the model and the tests see https://github.com/kramerlab/direct-ranker.}

\subsection{Sensitivity on Dataset Properties}\label{sec:results_datasets}
\begin{figure*}[htb]
\centering
\begin{subfigure}[b]{0.4\textwidth}
	\begin{tikzpicture}
	    \draw (0, 0) node[inner sep=0] (pic) {\includegraphics[width=\textwidth]{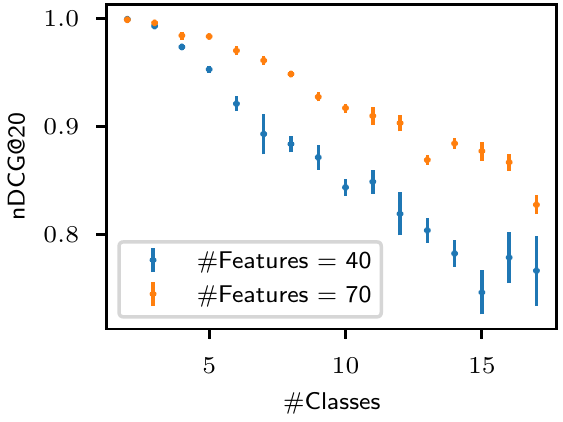}};
	    \draw ($(pic.south west)+(.2,.2)$) node {\parbox[b]{0.5cm}{\caption{}\label{fig:class_dependence}}};
	\end{tikzpicture}
\end{subfigure}
\begin{subfigure}[b]{0.4\textwidth}
	\begin{tikzpicture}
	    \draw (0, 0) node[inner sep=0] (pic) {\includegraphics[width=\textwidth]{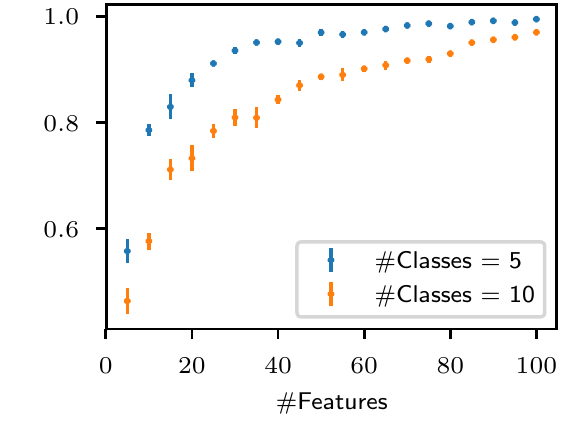}};
	    \draw ($(pic.south west)+(.2,.2)$) node {\parbox[b]{0.5cm}{\caption{}\label{fig:feature_dependence}}};
	\end{tikzpicture}
\end{subfigure}
\begin{subfigure}[b]{0.4\textwidth}
    \begin{tikzpicture}
	    \draw (0, 0) node[inner sep=0] (pic) {\includegraphics[width=\textwidth]{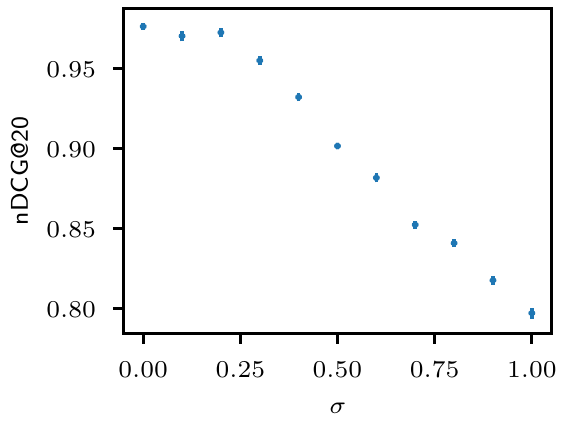}};
	    \draw ($(pic.south west)+(.2,.2)$) node {\parbox[b]{0.5cm}{\caption{}\label{fig:mislabeling}}};
	\end{tikzpicture}
\end{subfigure}
\begin{subfigure}[b]{0.4\textwidth}
	\begin{tikzpicture}
	    \draw (0, 0) node[inner sep=0] (pic) {\includegraphics[width=\textwidth]{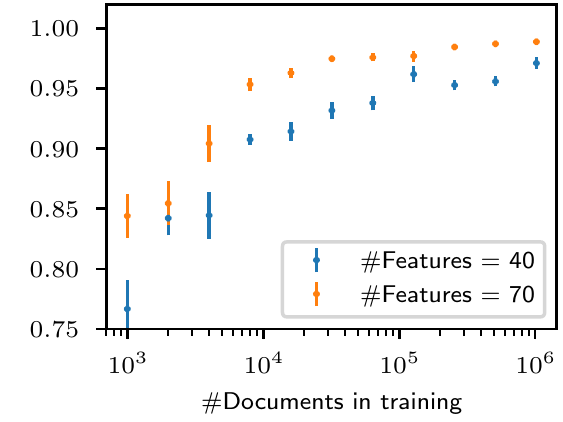}};
	    \draw ($(pic.south west)+(.2,.2)$) node {\parbox[b]{0.5cm}{\caption{}\label{fig:size_dependence}}};
	\end{tikzpicture}
\end{subfigure}
\caption{Plots depicting the sensitivity of the DirectRanker performance on certain data properties, evaluated on the synthetic data (\Cref{sec:setup_datasets}). (\subref{fig:class_dependence}) Dependence on the number of relevance classes ($10^5$ documents in training set). (\subref{fig:feature_dependence}) Dependence on the number of features ($10^5$ documents in training set, five relevance classes). (\subref{fig:mislabeling}) Performance of the DirectRanker with different noise levels on the class labels with 5 classes and 70 features. (\subref{fig:size_dependence}) Dependence on the size of the training set (five relevance classes).}
\end{figure*}
With the following tests we discuss how the DirectRanker performs under different circumstances. The tests were performed as described in \Cref{sec:setup_datasets}. The performance of the DirectRanker was tested for different numbers of relevance classes (\Cref{fig:class_dependence}), features (\Cref{fig:feature_dependence}), for variations of noise on the class labels (\Cref{fig:mislabeling}), and differently sized training sets (\Cref{fig:size_dependence}).
\\ \\
The tests show that, given enough data, our model is able to handle a diverse range of datasets. It especially shows that the DirectRanker can handle many relevance classes as shown in \Cref{fig:class_dependence}. As one would expect, the performance decreases with the number of relevance classes. However, this effect can be counteracted by increasing the size of the training set (see \Cref{fig:size_dependence}) or the number of features (\Cref{fig:feature_dependence}). Additionally, \Cref{fig:mislabeling} shows the robustness of the DirectRanker against noise on the relevance classes. Up to some small noise (approximately $5\%$ mislabeling, i.e. $\sigma = 0.25$), the performance decreases only marginally, but drops significantly for larger noise. Still, even with $50\%$ of the documents being mislabeled (i.e. $\sigma = 0.75$) the NDCG@20 does not drop below 0.80. This suggests that the theoretical findings in \Cref{sec:theory} for ideal data stay valid for real-world data. 

\subsection{Comparing Neighbors in a Sorted List}\label{sec:rank-classification}
\begin{figure*}[htb]
\centering
\begin{subfigure}[b]{0.4\textwidth}
	\begin{tikzpicture}
	    \draw (0, 0) node[inner sep=0] (pic) {\includegraphics[width=\textwidth]{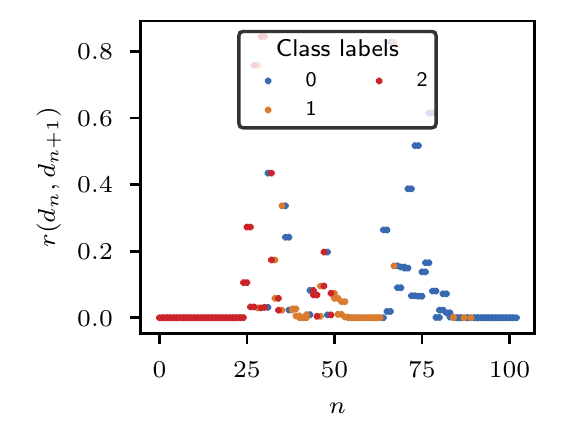}};
	    \draw ($(pic.south west)+(.2,.2)$) node {\parbox[b]{0.5cm}{\caption{}\label{fig:own_sorted_list}}};
	\end{tikzpicture}
\end{subfigure}
\begin{subfigure}[b]{0.4\textwidth}
	\begin{tikzpicture}
	    \draw (0, 0) node[inner sep=0] (pic) {\includegraphics[width=\textwidth]{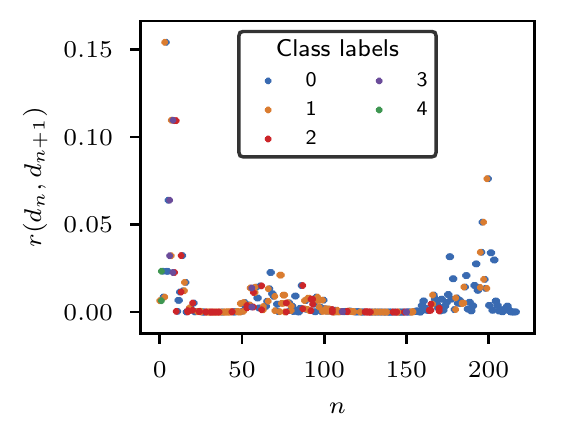}};
	    \draw ($(pic.south west)+(.2,.2)$) node {\parbox[b]{0.5cm}{\caption{}\label{fig:letor_sorted_multi}}};
	\end{tikzpicture}
\end{subfigure}
\begin{subfigure}[b]{0.4\textwidth}
	\begin{tikzpicture}
	    \draw (0, 0) node[inner sep=0] (pic) {\includegraphics[width=\textwidth]{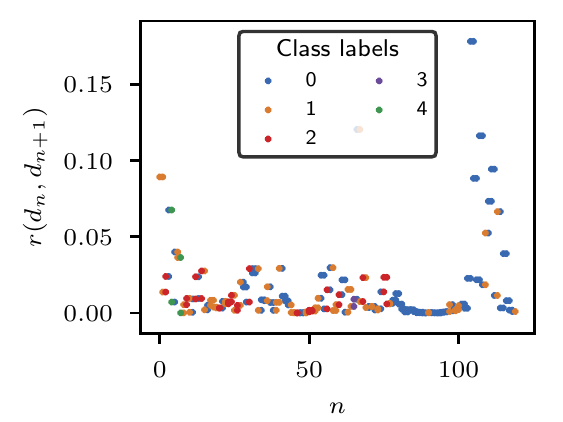}};
	    \draw ($(pic.south west)+(.2,.2)$) node {\parbox[b]{0.5cm}{\caption{}\label{fig:letor_sorted_binary}}};
	\end{tikzpicture}
\end{subfigure}
\caption{Output $r$ of the DirectRanker of two successive documents $d_n, d_{n+1}$ inside a list sorted by the DirectRanker. (\subref{fig:own_sorted_list}) For synthetic data. (\subref{fig:letor_sorted_multi}) For a single query from the MSLR--WEB10K data, unbinarized, five relevance classes. (\subref{fig:letor_sorted_binary}) Same as (\subref{fig:letor_sorted_multi}) (but different query), binarized relevance classes.}
\end{figure*}
Additionally to standard ranking, we suggest that the Direct\-Ranker can be used for classification as well. \Cref{fig:own_sorted_list} depicts the output of the DirectRanker of successive instance pairs inside a sorted list of synthetically generated data. Evidently, the output peaks at the locations on which the relevance classes change. Classes can therefore be separated  by finding these peaks, even on unlabeled data. The same effect can be observed for the MSLR--WEB10K dataset. In \Cref{fig:letor_sorted_multi} the ranker was trained using all five relevance classes, in \Cref{fig:letor_sorted_binary} using the binarized ones. In both of these cases we can again observe the peaks in the ranker output separating the classes, even though the effect is less pronounced compared to the synthetic data. It is remarkable that even in the binarized case the peaks separate the relevance classes relatively well.
\\ \\
It might be feasible to train a classifier by using it in the feature part of the DirectRanker and comparing pairs of instances with each other during training. The trained classifier would be obtained by connecting $o_1$ directly to $nn_1$ keeping the weights. The discussed peaks can be used to find a threshold in $o_1$ separating the classes independent of knowledge about the true label. 
\\ \\
This approach could be an advantage in datasets with little occurrence of one class because the large number of ways to combine two instances of different classes boosts the statistics.

\section{Discussion and Conclusions}\label{sec:conclusion}
The scheme for network structures proposed and analyzed in this paper is a generalization of RankNet: We show which properties of components of RankNet are essential to bring about its favorable behavior and doing so, pave the way for performance improvements. As it turns out, only a few assumptions about the network structures are necessary to be able to learn an order of instances. The requirements on the data for training are also minimal: The method can be applied to discrete and continuous data, and can be employed for simplified training schedules with the comparison of neighboring classes (or other relevant pairs of relevance classes) only. Theoretical results shed some light on the reasons why this is the case. Experiments confirm this and show that the scheme delivers excellent performance also on real-world data, where we may assume that instances are mislabeled with a certain probability. In many recent comparisons, RankNet is shown to exhibit inferior performance, leading to the conclusion that listwise approaches are to be preferred over pairwise approaches. Looking at the experimental results on the LETOR dataset in this paper, there may be reason to reconsider that view. However, it might be interesting to adapt the ideas LambdaRank and LambdaMART for listwise optimization to the DirectRanker model.
\\ \\
Also, it is remarkable that such a simple, transparent approach can match or outperform the performance of more recent and much more complex models, like ES-Rank. Experiments with synthetic data show how the performance can degrade when given more relevance classes, fewer features or fewer training instances. However, these results also indicate how the loss of performance can be compensated by the other factors. Additionally to standard ranking, we suggest that the Direct\-Ranker can be used for classification as well. First tests showed promising results. A more systematic investigation of this is the subject of future work.

\newpage

\subsubsection{Acknowledgement}
We would like to thank Dr. Christian Schmitt for his contributions to the work presented in this paper. 
\\ \\
We also thank Luiz Frederic Wagner for proof(read)ing the mathematical aspects of our model.	
\\ \\
Parts of this research were conducted using the supercomputer Mogon and/or advisory services offered by Johannes Gutenberg University Mainz (hpc.uni-mainz.de), which is a member of the AHRP (Alliance for High Performance Computing in Rhineland Palatinate,  www.ahrp.info) and the Gauss Alliance e.V.
\\ \\
The authors gratefully acknowledge the computing time granted on the supercomputer Mogon at Johannes Gutenberg University Mainz (hpc.uni-mainz.de).
\\ \\
This research was partially funded by the Carl Zeiss Foundation Project: 'Competence Centre for High-Performance-Computing in the Natural Sciences' at the University of Mainz.

\bibliographystyle{splncs04}

\end{document}